\def\Beweisende{\square}            
\def\BewEnde{\hfill{\Beweisende}}
\def\phm{{\hphantom{-}}}
\def\phi{\varphi}
\def\RR{{\mathbb R}}
\def\EE{{\mathbb E}}
\def\overset#1#2{\mathop{#2}\limits^{#1}}
\def\Vkt#1{{\mathbf #1}}
\definecolor{blau}{rgb}{0,0,1}
\newcommand{\blau}{\color{blau}}
\definecolor{rot}{rgb}{1,0,0}
\newcommand{\rot}{\color{rot}}
\definecolor{green}{rgb}{0,1,0} 
\newcommand{\green}{\color{green}}
\definecolor{cyan}{rgb}{0,1,1}
\newcommand{\cyan}{\color{cyan}}
\definecolor{violet}{rgb}{1,.54,1}
\newcommand{\violet}{\color{violet}}
\definecolor{magenta}{rgb}{1,0,1}
\newcommand{\magenta}{\color{magenta}}
\begin{document}

\title*{Evaluating the 
snappability of bar-joint frameworks}
\author{G. Nawratil}
\authorrunning{G. Nawratil}
\institute{
  Institute of Discrete Mathematics and Geometry \& Center for Geometry and Computational Design, TU Wien, Austria, 
  \email{nawratil@geometrie.tuwien.ac.at}}

\maketitle

\abstract{
It is well-known that there exist bar-joint frameworks (without continuous flexions) 
whose physical models can snap between different realizations due to non-destructive elastic deformations of material. 
We present a method to measure these 
snapping capability -- shortly called {\it snappability} -- based on the 
total elastic strain energy of the framework by computing the deformation of all bars using Hooke's law. 
The presented theoretical results give further connections between 
shakiness and snapping beside the well-known technique of averaging and deaveraging. 
} 

\keywords{Snapping framework, multistability, model flexor, elastic deformation}

\section{Introduction}
\label{sec:introduction}

We consider a framework in the Euclidean $n$-dimensional space $\EE^n$ which consists of a knot set $\mathcal{K}=\left\{K_1,K_2,\ldots,K_s\right\}$ 
and an abstract graph $G$ on $\mathcal{K}$ fixing the combinatorial structure. We denote the edge connecting  $K_i$ to $K_j$  by
$e_{ij}$ with $i<j$ and collect all indices of knots edge-connected to $K_i$ in the knot neighborhood $N_i$.  
Moreover we denote the number of edges in the graph by $b$ and assign\footnote{This assignment corresponds to the definition of the intrinsic metric 
of the framework.} a length $L_{ij}\in\RR_{>0}$ to each edge $e_{ij}$. In general this assignment does not determine the shape of the framework uniquely 
thus a framework has different incongruent  
realizations. For example, a triangular framework has in general two realizations in $\EE^2$, 
which are not congruent with respect to the group of direct isometries. 
If we consider the isometry group
then this number halves. \\
By materializing all edges by straight bars and linking them by $S_n$-joints\footnote{$S_n$ denotes the 
spherical joint, which enables the group of spherical motions SO($n$) of $\EE^n$. Note that a $S_2$-joint equals 
a rotational joint (R-joint).} in the knots, 
we end up with a so-called bar-joint framework. We assume that (1) all bars are uniform made of a
homogeneous isotropic material deforming at constant volume (i.e.\ Poisson's ratio $\nu=1/2$)
and that (2) all $S_n$-joints are without clearance. 

In the following we consider the configuration of knots $\Vkt k:=(\Vkt k_1,\ldots ,\Vkt k_s)$, where $\Vkt k_i$ denotes the $n$-dimensional coordinate vector 
of the knot $K_i$, 
which together with  the graph $G$ implies the framework's realization $G(\Vkt k)$.
In the rigidity community (e.g.\ \cite{connelly_book}) each edge $e_{ij}$ is assigned with a {\it stress (coefficient)} ${\omega}_{ij}\in\RR$.  
If in every knot $i\in\left\{1,\ldots, s\right\}$ the so-called {\it equilibrium condition}
\begin{equation}\label{equilibrium}
\sum_{i<j\in N_i}{\omega}_{ij}(\Vkt k_i-\Vkt k_j) + \sum_{i>j\in N_i}{\omega}_{ji}(\Vkt k_i-\Vkt k_j)=\Vkt o
\end{equation}
is fulfilled, where $\Vkt o$ denotes the $n$-dimensional zero-vector, then the $b$-dimensional vector ${\omega}=(\ldots, {\omega}_{ij} , \ldots)^T$ is refereed as 
{\it self-stress} (or {\it equilibrium stress}).  
According to Gluck \cite{gluck} and Roth \cite{roth} the existence of a non-zero self-stress\footnote{$\omega$ differs from the $b$-dimensional zero vector.}
corresponds to the infinitesimal flexibility (shakiness) of the framework's realization $G(\Vkt k)$.  

Shakiness (of order one\footnote{Each additional coinciding realization 
raises the order of the infinitesimal flexibility by one \cite{wohlhart}.}) can also be seen as the limiting case where two realizations of a framework coincide 
\cite{stachel_wunderlich,stachel_between,wohlhart}. 
In contrast a realization $G(\Vkt k')$ is called a snapping realization if it is {\it close enough} 
to another incongruent realization such that the physical model can snap (flip/jump) into this realization due to non-destructive elastic deformations of material. 
The open problem in this context is the meaning of {\it closeness}, which is tackled in this article. In more detail, 
we present a method to measure the snapping capability (shortly called {\it snappability}) of a realization $G(\Vkt k')$, based on  
the total elastic strain energy of the \bigskip framework.

Before we  plunge in medias res we provide a short review on snapping (also called {\it multistable}; cf.\ \cite{goldberg}) 
structures. 
During the last years the interest in these structures has increased  due to practical applications (e.g.\ \cite{rafsanjani,haghpanah,shang}).

It is pointed out in \cite{stachel_between} that there is a direct connection between shakiness and snapping
through the technique of {\it averaging} and {\it deaveraging}, respectively (cf.\ \cite[page 1604]{schulze} and 
\cite{ivan}).
The latter allows to construct snapping frameworks in any dimension. 
Moreover for snapping bipartite frameworks in $\EE^n$ an explicit result in terms of confocal hyperquadrics is known 
(cf.\ \cite[page 112]{stachel_between} under consideration of \cite{stachel_palermo}). 
Most results are known for the dimension $n=3$, which are summarized next.

There is a series of papers of Walter Wunderlich on snapping spatial structures 
(octaeder, Bennett mechanisms, antiprisms, icosaeder, dodecaeder), 
which are reviewed in \cite{stachel_wunderlich}. In this context also the paper \cite{goldberg} 
should be cited, where {\it buckling polyhedral surfaces} and {\it Siamese dipyramids} are introduced. 
Snapping structures are also related to so-called {\it model flexors}\footnote{Mathematically these structures do not posses a 
continuous flexibility but due to free bendings without visible distortions of materials their physical models flex.} 
(cf.\ \cite{milka}) as in some cases the model flexibility can be reasoned by the snapping through different realizations. 
Examples for this phenomenon are the so-called {\it four-horn} \cite{schwabe} or the already mentioned {\it Siamese dipyramids}.
The latter are studied in more detail in \cite{gorkavyy}, especially how relative variations on the edge lengths 
produce significant relative variations in the spatial shape. The authors of \cite{gorkavyy} also suggested estimates to quantify these 
intrinsic and extrinsic variations.

\section{Physical model of deformation}

First we consider a single bar $e_{ij}$ and apply equal but opposite directed forces $F_{ij}$ to it's ends pointing outwards/inwards the bar, 
which imply a tensile/compression stress leading to an expansion/decrease of the bar. 
According to Hooke's law, which can be applied due to the elastic deformation during the process of snapping, 
 the tensile/compression stress $\delta_{ij}$ in a uniform bar equals the product of the modulus of elasticity \footnote{In this paper we assume $E_{ij}>0$ as for conventional structural material $E_{ij}$ is positive.} $E_{ij}>0$ 
and the Cauchy/engineering strain\footnote{Equals in this case the engineering normal strain $\varepsilon_{ij}$ of a material line segment $e_{ij}$.}  
$\varepsilon_{ij}$; i.e. 
\begin{equation}\label{eq2}
\delta_{ij}=E_{ij}\varepsilon_{ij} \quad \text{with} \quad \varepsilon_{ij}=\frac{l_{ij}-L_{ij}}{L_{ij}} \quad \text{for} \quad i<j
\end{equation}
where $l_{ij}\in\RR_{>0}$ is the deformed length (stressed length) of the bar $e_{ij}$, while 
$L_{ij}$ is the original length (unstressed length) of the bar $e_{ij}$. 
Note that  $\delta_{ij}>0$  corresponds to a stretching and $\delta_{ij}<0$ to a compression. 
Taking Eq.\ (\ref{eq2}) into account the well-known relation $F_{ij}=\delta_{ij}A_{ij}$, where $A_{ij}$ is the cross-sectional area of the bar $e_{ij}$, can 
be rewritten as
\begin{equation}\label{kraft}
F_{ij}=E_{ij}A_{ij}\frac{l_{ij}-L_{ij}}{L_{ij}}  
\quad \text{for} \quad i<j. 
\end{equation} 
This force acts on the end points $K_i$ and $K_j$ of the involved bar $e_{ij}$  by the force vectors
\begin{equation}
\Vkt f_{ij}= F_{ij}\frac{\Vkt k_i-\Vkt k_j}{\| \Vkt k_i-\Vkt k_j \|} \quad \text{and} \quad 
\Vkt f_{ji}= -\Vkt f_{ij},
\end{equation}
where $\|.\|$ denotes the {\it standard norm}. 
Moreover, the elastic strain energy\footnote{Note that elastic strain energy is a form of potential energy.} stored in the bar $e_{ij}$  can be computed by 
\begin{equation}
U_{ij}=\tfrac{1}{2}F_{ij}(l_{ij}-L_{ij}) \quad  \overset{\text{$(\ref{kraft})\,\,$}}{\implies} \quad
U_{ij}=\frac{1}{2}\frac{E_{ij}A_{ij}}{L_{ij}} (l_{ij}-L_{ij})^2
\end{equation}
according to \cite[page 512]{mittemeijer}, where  $E_{ij}A_{ij}/L_{ij}$ is the axial stiffness  
of the bar $e_{ij}$. 
As a consequence the total deformation energy $U$ of the framework reads as 
\begin{equation}\label{energie}
U:=\sum_{i<j}U_{ij}=\frac{1}{2}\sum_{i<j}\frac{E_{ij}A_{ij}}{L_{ij}}(l_{ij}-L_{ij})^2.
\end{equation}

\begin{remark}
In the study of rigidity and stability of frameworks similar functions appear; e.g.\ the function 
\begin{equation}
\Phi=\frac{1}{2}\sum_{i<j} \frac{F_{ij}}{L_{ij}}\left(
l_{ij}^2-L_{ij}^2
\right)
\end{equation}
was used by K\"otter \cite{koetter}. 
Connelly and Whiteley \cite{CW1992} based their results on a potential energy function 
$\sum_{ij}H_{ij}\|\Vkt k_i-\Vkt k_j\|^2$
where $H_{ij}$ is any real-valued function of one variable, which has a local minimum at $\|\Vkt k_i-\Vkt k_j\|^2$. \hfill $\diamond$
\end{remark}

Expressing $l_{ij}$ in dependence of the knots $K_i$ and $K_j$ the system of 
partial derivatives 
\begin{equation}
\frac{\partial U}{\partial k_{i,1}}=0, \quad
\frac{\partial U}{\partial k_{i,2}}=0, \quad
 \ldots ,\quad 
\frac{\partial U}{\partial k_{i,n}}=0,
\end{equation}
where $(k_{i,1},\ldots , k_{i,n})$ is the coordinate vector of $\Vkt k_i$,  
equals the condition $\sum_{j\in N_i} \Vkt f_{ij}=\Vkt o$ for $i=1,\ldots, s$.
With respect to our physical model the meaning of the real values ${\omega}_{ij}$ in Eq.\ (\ref{equilibrium}) is 
$F_{ij}/l_{ij}$, which is the so-called {\it force density} with respect to the stressed length \cite{tibert}.
Moreover it should be pointed out that each {\it critical point} of the total elastic strain-energy $U(\Vkt k)$ 
corresponds to a self-stressed framework realization $G(\Vkt k)$, which is called {\it deformed} for $U(\Vkt k)>0$ and {\it undeformed} for $U(\Vkt k)=0$.

\begin{remark}
The system of equations $\sum_{j\in N_i} \Vkt f_{ij}=\Vkt o$ was also obtained by Linkwitz and Schek \cite[page 149ff.]{schek}, 
who studied the form finding problem for cable networks. Within their proposed {\it force density method} \cite{tibert} values are assigned to 
the mentioned {\it force densities} rendering these equations linear in the knot coordinates. \hfill $\diamond$
\end{remark}

\subsection{Metric interpretation of $U$}\label{geom_inter}

This physical model implies in the $b$-dimensional  
space $\RR^b$ of bar lengths the following scalar product $\langle ., . \rangle_P$: $\RR^b\times\RR^b\rightarrow \RR$ with 
$(\Vkt x,\Vkt y)\mapsto\langle \Vkt x, \Vkt y \rangle_P:= \Vkt x^T P\, \Vkt y$ and
\begin{equation}
\Vkt x:=\begin{pmatrix} \vdots \\  x_{ij} \\ \vdots \end{pmatrix} , \quad
\Vkt y:=\begin{pmatrix} \vdots \\  y_{ij} \\ \vdots \end{pmatrix} , \quad
P:=\begin{pmatrix}
\ddots & \phm & \phm \\
\phm & \tfrac{E_{ij}A_{ij}}{2L_{ij}} & \phm \\
\phm & \phm & \ddots
\end{pmatrix},
\end{equation}
as the involved $b\times b$ diagonal matrix $P$ is positive definite. 
Consequently this scalar product induces a norm $\|.\|_P$: $\RR^b\rightarrow \RR_{\geq 0}$ with $\Vkt x\mapsto \|\Vkt x\|_P:=\sqrt{\langle \Vkt x, \Vkt x \rangle_P}$ 
and a metric (distance function) $d_P(.,.)$: $\RR^b\times\RR^b\rightarrow \RR_{\geq 0}$ with $(\Vkt x,\Vkt y)\mapsto d_P(\Vkt x,\Vkt y):=\|\Vkt x-\Vkt y\|_P$.
Thus $U$ of Eq.\ (\ref{energie}) can be seen as squared distance between the deformed framework $\Vkt l = (\ldots, l_{ij} ,\ldots)^T$ and 
the initial one $\Vkt L = (\ldots, L_{ij} ,\ldots)^T$; i.e.\ $U=d_P(\Vkt l,\Vkt L)^2$. 
It should be pointed out that this distance only depends on the intrinsic metric of the framework and is therefore independent of the actual realization. 

In order to reduce the  distance measure $d_P$ to its geometric core we set $E_{ij}=1$ and $A_{ij}=A$ for all bars 
in the remainder of the article.

\section{Snappability of realizations}

If we have a realization $G(\Vkt k)$ with non-zero self-stress and the corresponding 
total elastic energy $U(\Vkt k)$ is not at a local minimum, then small perturbations would deform the 
framework according to the {\it minimum total potential energy principle}.
Therefore we are interested in the set $\mathcal{S}$ of {\it stable} realizations, i.e.\ realizations which are at a local minimum of the total elastic strain-energy. 
Note that the set $\mathcal{S}$ is not empty as it contains at least the undeformed framework 
realizations.

\subsection{Computation of the set $\mathcal{S}$}\label{subsec:comp}

For the computation of the  set $\mathcal{S}$ for a given framework with a given intrinsic metric the following approach is used. 
We introduce new variables $q_{ij}$ fulfilling the side condition $\Lambda_{ij}=0$ with $\Lambda_{ij}:=q_{ij}^2-\|\Vkt k_i -\Vkt k_j\|^2$ 
and make the Lagrange ansatz 
\begin{equation}
F(\Vkt k,\Vkt q,\lambda):=\frac{A}{2}\sum_{i<j}\frac{1}{L_{ij}}(q_{ij}-L_{ij})^2 + \sum_{i<j}{\lambda}_{ij}\Lambda_{ij}
\end{equation}
with the $b$-dimensional vectors $\Vkt q:=(\ldots, q_{ij}, \ldots)^T$ and $\lambda:=(\ldots, {\lambda}_{ij}, \ldots)^T$, where the latter 
is composed of the Lagrange multipliers ${\lambda}_{ij}$. By taking the partial derivatives of $F$ with respect to the $(sn + 2b)$ 
variables $(\Vkt k,\Vkt q,\lambda)$ we obtain a system of equations, which is of algebraic nature. 
Therefore we can 
use homotopy continuation method (e.g.\ Bertini; cf.\ \cite{bates}), 
as other approaches (e.g.\ Gr\"obner base, resultant based elimination) are not promising due to the number of unknowns and degree of equations. 
First of all we can restrict to the obtained real critical points of $F$ with all $q_{ij}>0$ as only these correspond to realizations. 
This resulting set $\mathcal{R}$ of realizations is split into a set $\mathcal{M}$ and its absolute complement $\mathcal{M}^c=\mathcal{R}\setminus\mathcal{M}$, 
where the elements of $\mathcal{M}$ correspond to local minima of $U(\Vkt k)$. They can be identified by the so-called {\it second derivative test}; i.e.\ all eigenvalues of the  
{\it Hessian matrix} of the function $U(\Vkt k)$ 
are positive. 
Finally the desired set $\mathcal{S}$ can be obtained as the quotient $\mathcal{M}/$SE($n$), where SE($n$) denotes the group of direct isometries of $\EE^n$. 
In the same way we define the set $\mathcal{S}^c:=\mathcal{M}^c/$SE($n$) of {\it unstable} realizations, which is needed later on.

\subsection{Measuring the snappability}

The evaluation of the {\it snappability} has to be based on the intrinsic metric of the framework, as 
minor changes of this metric can heavily effect its spatial shape (cf.\ examples of the {\it four-horn} \cite{schwabe} 
and {\it Siamese dipyramids} \cite{gorkavyy}). 
Our intrinsic metric approach towards the determination of the {\it snappability} of an undeformed realization  
is based on the following theorem:
\begin{theorem}\label{thm1}
If a framework snaps out of a 
stable realization $G(\Vkt k')$ by applying the minimum energy needed to it, then 
the corresponding deformation of the realization has to pass a shaky realization $G(\Vkt k'')$ at the maximum state of deformation. 
\end{theorem}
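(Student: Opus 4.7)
\noindent\emph{Proof proposal.} The plan is to phrase ``the minimum energy needed to snap'' as a min--max variational problem on the finite-dimensional configuration space $\RR^{sn}$ and to invoke a mountain-pass type argument for $U$. First I would parametrize a snapping motion by a continuous path $\gamma:[0,1]\to\RR^{sn}$ with $\gamma(0)=\Vkt k'$ and $\gamma(1)$ lying in the basin of attraction of a \emph{different} local minimum of $U$ (i.e.\ of a different element of $\mathcal{S}$). Along such a path the largest amount of elastic strain energy ever accumulated is
\begin{equation}
E(\gamma):=\max_{t\in[0,1]}U(\gamma(t)),
\end{equation}
and the ``minimum energy needed to snap'' from $G(\Vkt k')$ is $E^{\ast}:=\inf_{\gamma}E(\gamma)$, the infimum being taken over all admissible paths. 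The ``maximum state of deformation'' in the theorem is then, by construction, a configuration $\Vkt k''=\gamma^{\ast}(t^{\ast})$ at which this minimax value is realized.

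Next I would show that any such minimax configuration $\Vkt k''$ is necessarily a critical point of $U$. The standard contradiction applies: if $\nabla U(\Vkt k'')\neq \Vkt o$, then pushing the path near $t^{\ast}$ a small distance against $\nabla U$ would produce a competitor path $\tilde\gamma$ with $E(\tilde\gamma)<E(\gamma^{\ast})$, contradicting the minimality of $E^{\ast}$. From Section~2 we know that $\nabla U(\Vkt k'')=\Vkt o$ is exactly the equilibrium condition $\sum_{j\in N_i}\Vkt f_{ij}=\Vkt o$, which is equation (\ref{equilibrium}) with force densities $\omega_{ij}:=F_{ij}/l_{ij}$, so the realization $G(\Vkt k'')$ is self-stressed.

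It remains to show that this induced self-stress is \emph{non-zero}. Since $G(\Vkt k')$ is an undeformed (hence strict) local minimum of $U$ and $\gamma(1)$ lies in a different basin, every admissible path must cross a ridge of strictly positive energy, whence $E^{\ast}=U(\Vkt k'')>0$. By (\ref{energie}) this forces $l_{ij}\neq L_{ij}$ for at least one bar, so $F_{ij}\neq 0$, and therefore the induced $\omega\neq\Vkt o$. By the Gluck/Roth characterization cited in Section~1, a non-vanishing self-stress is equivalent to shakiness of $G(\Vkt k'')$, which is the desired conclusion.

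The main obstacle will be the first step, namely the rigorous realization of the infimum: one has to address non-compactness of $\RR^{sn}$ (handled by quotienting out $\mathrm{SE}(n)$ and exploiting coercivity of $U$ in the bar-length differences), the possibility of degenerate or non-isolated critical sets of $U$ (where a Palais--Smale / deformation-lemma argument replaces the naive gradient-flow perturbation), and the delicate question of into which other component of the sublevel set $\{U<E^{\ast}\}$ the framework actually snaps. These issues are essentially benign for generic edge length assignments $L_{ij}$, but a careful statement of the theorem must either assume some non-degeneracy of $\Vkt k''$ or re-interpret $\Vkt k''$ via a suitable limiting procedure.
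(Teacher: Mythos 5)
Your proposal is correct and follows essentially the same route as the paper: the paper's proof simply asserts that a minimum-energy escape from the valley of the local minimum $(\Vkt k', U(\Vkt k'))$ must pass a \emph{saddle point} of the graph of $U$, which as a deformed critical point carries a non-zero self-stress and is therefore shaky by the Gluck/Roth characterization. Your mountain-pass formulation of $E^{\ast}$ and the deformation-lemma argument merely make rigorous the steps the paper leaves implicit, and the technical caveats you raise (compactness, reality of the deformation path) are acknowledged in the paper only as an open problem in Remark~\ref{op:bertini}.
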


\begin{proof}
We think of $U$ as a graph function over the space $\RR^{sn}$ of knots $\Vkt k$; i.e.\ the ordered pair $(\Vkt k, U(\Vkt k))$. 
In order to get out of the valley of the local minimum  $(\Vkt k', U(\Vkt k'))$, which corresponds to the given stable realization $G(\Vkt k')\in\mathcal{S}$, 
with a minimum of energy needed, one has to pass a {\it saddle point} $(\Vkt k'', U(\Vkt k''))$ of the graph, which corresponds to a realization $G(\Vkt k'')\in\mathcal{S}^c$. 
As all realizations of $\mathcal{S}^c$ are self-stressed and deformed 
($\Rightarrow$ no zero self-stress) they are infinitesimally flexible. \hfill $\BewEnde$
\end{proof}

Based on the pseudometric $d_S$: 
 $\RR^b\times\RR^b\rightarrow \RR_{\geq 0}$ with 
$(\Vkt l',\Vkt l'')\mapsto |d_p(\Vkt l'',\Vkt L)^2-d_p(\Vkt l',\Vkt L)^2|$ 
the  snappability index $s(\Vkt k')$ of  $G(\Vkt k')\in\mathcal{S}$ can be quantified as follows
\begin{equation}\label{eq:snap}
s(\Vkt k')=\frac{|U(\Vkt k'')-U(\Vkt k')|}{AL}=\frac{|U(\Vkt l'')-U(\Vkt l')|}{AL}=
\frac{d_S(\Vkt l',\Vkt l'')}{AL}
\end{equation}
where $\Vkt l' = (\ldots, l_{ij}' ,\ldots)^T\in \RR^b$ and $\Vkt l'' = (\ldots, l_{ij}'' ,\ldots)^T\in \RR^b$ are the bar lengths of the 
$G(\Vkt k')\in\mathcal{S}$ and  $G(\Vkt k'')\in\mathcal{S}^c$, respectively, using the notation of the above proof and 
$L=\sum_{i<j}L_{ij}$ denotes the framework's total length. 
Note that due to the division by the framework's volume $AL$, which is constant according to the assumed Poisson's ratio $\nu=1/2$, 
the snappability index $s(\Vkt k')$ can be interpreted as the change of the {\it elastic strain energy density} $U/(AL)$. 
Therefore $s(\Vkt k')$ is invariant with respect to scaling (taking into account $E_{ij}=1$; cf.\ Sec.\ \ref{geom_inter})  and enables the  
comparison of frameworks, which differ in the number of knots, the combinatorial structure and intrinsic metric. 
Note that the minimum of the obtained snappability indices over all undeformed realizations can be seen as the snappability index of the framework.   

Before we give the algorithm for computing the snappability index $s(\Vkt k)$ of an 
undeformed realization $G(\Vkt k)$, it should be noted 
that the concrete curve $(\Vkt k_t, U(\Vkt k_t))$  on the graph connecting 
$(\Vkt k, 0)$ and the saddle point $(\Vkt k'', U(\Vkt k''))$ does not play a role as long as 
the deformation energy $U_{ij}$ of each bar $e_{ij}$ is monotonic increasing 
with respect to the curve parameter $t$. 
This is due to the fact that along each curve of this possible set $\mathcal{C}$ of curves the same amount of mechanical work (namely the minimum work needed) 
is performed on the framework to reach the saddle \medskip point.

\noindent
{\bf Algorithm.} Given is an undeformed realization $G(\Vkt k)$. 
Let us assume that the unstable realization $G(\Vkt k'')\in\mathcal{S}^c$ yields the minimal function value $U(\Vkt k'')$. 
We consider the simplest possible path in $\RR^b$, namely the straight line segment from $\Vkt L$ to $\Vkt l''$ 
and parametrize it with respect to the time $t\in[0,1]$ yielding 
$\Vkt l_t:=\Vkt L+t(\Vkt l''-\Vkt L)$. 
This path corresponds to different 1-parametric deformations of realizations in $\EE^n$. 
If among these a deformation $G(\Vkt k_t)$ with the property 
\begin{equation}\label{property}
G(\Vkt k_t)\big|_{t = 0}=G(\Vkt k),\quad G(\Vkt k_t)\big|_{t = 1}=G(\Vkt k'') 
\end{equation}
exists ($\Rightarrow$ $(\Vkt k_t, U(\Vkt k_t))\in\mathcal{C}$), then the undeformed realization $G(\Vkt k)$ can be left over the unstable realization $G(\Vkt k'')$ and we get a value for $s(\Vkt k)$. 
Computationally the property (\ref{property}) can be checked e.g.\ by a parameter homotopy approach (e.g.\ Bertini; cf.\ \cite[Sec.\ 6]{bates}). 
If such a deformation does not exist then we redefine 
$\mathcal{S}^c$ as $\mathcal{S}^c\setminus\left\{ G(\Vkt k'')\right\}$ and run again the procedure explained in this paragraph until we end up with a value for $s(\Vkt k)$. In the case of $\mathcal{S}^c=\varnothing$ we 
set $s(\Vkt k)=\infty$.

\begin{remark}\label{op:bertini}
Note that our computational approach using Bertini does not recognize if the tracked path between the 
real starting point and real endpoint of the homotopy is entirely real. More generally it is an open problem to 
check if there exists at least one real curve $\in\mathcal{C}$, which corresponds 
to such a continuous real deformation.  \hfill $\diamond$
\end{remark}

After applying the mechanical minimum work needed to deform $G(\Vkt k')$ into  $G(\Vkt k'')$ as 
described in Theorem \ref{thm1}, the framework will 
relax according to the {\it minimum total potential energy principle}. Therefore this self-acting deformation will end up in 
a stable realization or it get stuck on the way to such a local minimum by reasons of reality. 
Note that a realization at the boarder of reality has to be infinitesimal flexible, as  a real solution of an algebraic set of equations 
can only change over into a complex one through a double root. 
This results in the following theorem:

\begin{theorem}\label{thm2}
A snap of a framework described in Theorem \ref{thm1} ends up in a realization $G(\Vkt k''')$ which is either undeformed or a deformed one with a shakiness. 
\end{theorem}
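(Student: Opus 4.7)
The plan is to split the analysis according to whether the post-saddle relaxation terminates at an interior critical point of $U$ or is obstructed before reaching one. After the minimum-energy push across the saddle $G(\Vkt k'')$ of Theorem \ref{thm1}, the minimum total potential energy principle drives the framework along a path in $\RR^{sn}$ on which $U$ is non-increasing, and ending in some realization $G(\Vkt k''')$. The two cases cover all possible endpoints of this relaxation.

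First I would treat the case in which the path terminates at a local minimum of $U$. If $U(\Vkt k''')=0$ then $G(\Vkt k''')$ is undeformed and the first alternative of the statement holds immediately. If instead $U(\Vkt k''')>0$, then at least one bar $e_{ij}$ must satisfy $l_{ij}'''\neq L_{ij}$, and hence its axial force $F_{ij}$ is non-zero by (\ref{kraft}). By the observation following (\ref{energie}), every critical point of $U$ already satisfies the equilibrium condition (\ref{equilibrium}) with stress coefficients $\omega_{ij}=F_{ij}/l_{ij}$, so the self-stress vector $\omega$ is non-zero and the Gluck/Roth criterion gives shakiness of $G(\Vkt k''')$, placing it in the second alternative.

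In the complementary case the path gets stuck before any critical point is reached, i.e.\ blocked by reasons of reality in the sense of the paragraph preceding the statement. Here I would argue, exactly as the authors indicate, that the equations governing realizations are polynomial, so a continuous one-parameter family of real solutions can leave the real locus only by colliding with another real branch in a double root, beyond which the two roots merge into a complex conjugate pair. At such a double root the Jacobian of the length constraints, which is the rigidity matrix of $G(\Vkt k''')$, drops rank, and this rank defect is equivalent to the existence of a non-trivial self-stress and hence to shakiness in the sense used throughout the paper. The endpoint is necessarily deformed, for otherwise the path would have reached the global minimum $U=0$ rather than being blocked, so $G(\Vkt k''')$ again falls into the second alternative.

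The main obstacle I expect is formalising the border-of-reality argument in the second case: one needs to pin down the correct algebraic system whose real locus is being traced, to guarantee that the collision point is generic enough for the double-root criterion to produce a genuine infinitesimal flex of the framework rather than a degeneracy in auxiliary unknowns, and to exclude non-algebraic modes of termination such as escape to infinity. Under a mild genericity assumption on the relaxation path, this should follow from the standard identification of the rigidity matrix with the Jacobian of the length constraints, so the first case is the clean part and the second is where the real work sits.
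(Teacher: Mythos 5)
Your two-case decomposition (relaxation terminates at a local minimum of $U$, which is undeformed or else deformed and self-stressed hence shaky by Gluck/Roth; or relaxation is blocked at the border of reality, where the double-root argument for the algebraic system forces infinitesimal flexibility) is exactly the argument the paper gives in the paragraph preceding the theorem, which serves as its proof. Your added remarks on identifying the Jacobian of the length constraints with the rigidity matrix and on the genericity issues in the border-of-reality case are a fair elaboration of what the paper leaves implicit, but the route is the same.
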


\begin{remark}
Note that $s(\Vkt k')$ is independent of the final snapping realization $G(\Vkt k''')$. \hfill $\diamond$ 
\end{remark}

\subsection{Pinned frameworks}
From the mechanical point of view it makes sense to fix a system to the ground by pinning a subset $\mathcal{P}$ of the knot set $\mathcal{K}$. 
The results of this paper also hold  for this scenario of so-called {\it pinned graphs} (or {\it grounded graphs}) due to their fundamental properties 
summarized in \cite[Sec.\ 2.1]{nixon}. One only has to keep in mind that bars between 
pinned knots cannot be deformed and that the equilibrium condition (\ref{equilibrium}) only has to hold for knots $\in \mathcal{K}\setminus\mathcal{P}$ 
as pinned knots can counterbalance any force. 
This becomes clear by studying a trivial example (comparison of a pinned and unpinned triangular framework) given in the Appendix (cf.\ Sec.\ \ref{ex:triangle}), 
where the computation of the snappability index 
is also demonstrated for a more sophisticated framework (pinned 3-legged planar parallel manipulator; cf.\ Sec.\ \ref{ex:parallel}).

\section{Conclusion and open problems}

The total elastic strain energy of the framework (based on a physical model for the deformation of bars using Hooke's law) 
serves as base for the presented snappability index and the theoretical results of Theorem \ref{thm1} and \ref{thm2}, which give further connections between 
shakiness and snapping beside the technique of averaging and deaveraging.

Note that our approach neglects the possibility of collision of bars during the framework's deformation. 
A further open problem is mentioned in Remark \ref{op:bertini}.

\begin{acknowledgement}
The research is supported by Grant No.\ P\,30855-N32 of the Austrian Science Fund FWF. 
Thanks to Hellmuth Stachel for constructive feedback on the final draft.
\end{acknowledgement}

\bibliographystyle{spmpsci}

\begin{thebibliography}{99.}

\bibitem{bates}
Bates, D.J., Hauenstein J.D., Sommese, A.J., Wampler C.W.: 
Numerically Solving Polynomial Systems with Bertini. SIAM Philadelphia (2013)

\bibitem{connelly_book}
Connelly, R.: Rigidity. Handbook of Convex Geometry (P.M.\ Gruber, J.M.\ Wills eds.), pages 223--271, Elsevier (1993)

\bibitem{CW1992}
Connelly, R., Whiteley, W.:
The Stability of Tensegrity Frameworks. 
International Journal of Space Structures \textbf{7}(2) 153--163 (1992) 

\bibitem{gluck}
Gluck, H.: Almost all simply connected closed surfaces are rigid. 
Geometric Topology (L.C.\ Glaser, T.B. Rushing eds.), pages 225--239, Springer (1975) 

\bibitem{goldberg}
Goldberg, M.:
Unstable Polyhedral Structures. Mathematics Magazine \textbf{51}(3) 165--170 (1978)

\bibitem{gorkavyy}
Gorkavyy, V., Fesenko, I.:
On the model flexibility of Siamese dipyramids. Journal of Geometry \textbf{110}:7 (2019)

\bibitem{haghpanah}
Haghpanah, B., Salari-Sharif, L., Pourrajab, P., Hopkins, J., Valdevit, L.:
Multistable Shape-Reconfigurable Architected Materials. Advanced Materials \textbf{28}(36) 7915--7920 (2016)

\bibitem{ivan}
Izmestiev, I.:
Statics and kinematics of frameworks in Euclidean and non-Euclidean geometry.
Eighteen Essays in Non-Euclidean Geometry 
(V.\ Alberge, A.\ Papadopoulos eds.), pages 191--233, EMS Publishing House (2019) 

\bibitem{koetter}
K\"otter, E.: \"Uber die {M}\"oglichkeit, $n$ {P}unkte in der Ebene oder im Raume durch weniger als $2n-3$ oder $3n-6$ St\"abe von ganz 
unver\"anderlicher L\"ange unverschieblich miteinander zu verbinden.
Festschrift Heinrich M\"uller-Breslau (H.\ Boost et al eds.), pages 61--80, Alfred Kr\"oner Verlag Leipzig (1912)

\bibitem{schek}
Linkwitz, K., Schek, H.-J.: 
Einige {B}emerkungen zur {B}erechnung von vorgespannten {S}eilnetzkonstruktionen.
Ingenieur-Archiev \textbf{40}(3) 145--158 (1971) 

\bibitem{milka}
Milka, A.D.:
Linear bendings of right convex polyhedra. 
Matematicheskaya fizika, anliz, geometriya \textbf{1}(1) 116--130 (1994)

\bibitem{mittemeijer}
Mittemeijer, E. J.:
Fundamentals of Materials Science.
Springer (2011)

\bibitem{nixon}
Nixon, A., Schulze, B., Sljoka, A., Whiteley, W.:
Symmetry Adapted Assur Decomposition. 
Symmetry \textbf{6}(3) 516--550 (2014)

\bibitem{rafsanjani}
Rafsanjani, A., Akbarzadeh, A., Pasini, D.:
Snapping Mechanical Metamaterials under Tension. Advanced Materials \textbf{27}(39) 5931--5935 (2015)

\bibitem{roth}
Roth, B.: 
Rigid and Flexible Frameworks. The American Mathematical Monthly \textbf{88}(1) 6--21 (1981)

\bibitem{schulze}
Schulze, B., Whiteley, W.:
Rigidity and scene analysis. 
Handbook of Discrete and Computational Geometry (J.E.\ Goodman et al eds.), pages 1593--1632, 3rd edition,  CRC Press (2017)

\bibitem{shang}
Shang, X., Liu, L., Rafsanjani, A., Pasini, D.: 
Durable bistable auxetics made of rigid solids. Journal of Materials Research \textbf{33}(3) 300--308 (2018)

\bibitem{stachel_wunderlich}
Stachel, H.: W.\ {W}underlichs {B}eitr\"age zur {W}ackeligkeit. 
Technical Report No.\ 22, Institute of Geometry, TU Wien (1995)

\bibitem{stachel_palermo}
Stachel, H.: 
Configuration theorems on bipartite frameworks. 
Rendiconti del Circolo Matematico di Palermo (Series 2) \textbf{70}(II) 335--351 (2002)

\bibitem{stachel_between}
Stachel, H.: What lies between rigidity and flexibility of structures. 
Serbian Architectural Journal \textbf{3}(2) 102--115 (2011) 

\bibitem{tibert}
Tibert, A.G., Pellegrino, S.: Review of Form-Finding Methods for Tensegrity Structures. 
International Journal of Space Structures \textbf{26}(3) 241--255 (2011)

\bibitem{wohlhart}
Wohlhart, K.: Degrees of shakiness. 
Mechanism and Machine Theory \textbf{34}(7) 1103--1126 (1999)

\bibitem{schwabe}
Wunderlich, W., Schwabe, C.: 
Eine {F}amilie von geschlossenen gleichfl\"achigen {P}olyedern, die fast beweglich sind. 
Elemente der Mathematik \textbf{41}(4) 88--93 (1986)



\end{thebibliography}

\section{Appendix of examples}

\subsection{Pinned and unpinned triangular framework}\label{ex:triangle}

We study the triangle with vertices $K_1,K_2,K_3$ and edge lengths $L_{12}=10$, $L_{13}=7$ and $L_{23}=4$. 
In the first approach we consider an unpinned framework where we attach the fixed frame in the following way to the framework: 
The origin coincide with $K_1$ and the positive $x$-axis points into direction of $K_2$; i.e.\
\begin{equation}
\Vkt k_1=(0,0), \qquad \Vkt k_2=(k_{2,1},0), \qquad \Vkt k_3=(k_{3,1},k_{3,2}). 
\end{equation} 
In the second approach we pin $K_1$ and $K_2$ yielding
\begin{equation}
\Vkt k_1=(0,0), \qquad \Vkt k_2=(L_{12},0), \qquad \Vkt k_3=(k_{3,1},k_{3,2}). 
\end{equation}

\begin{figure}[t]
\begin{center} 
\begin{overpic}
    [width=110mm]{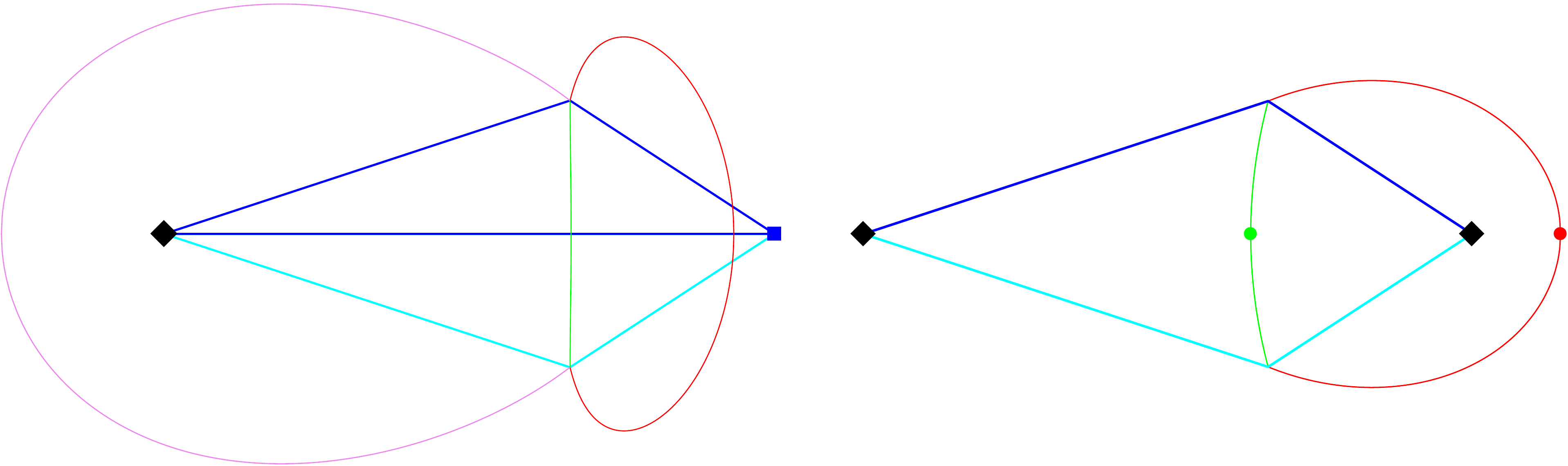}  
\begin{scriptsize}
\put(7,14){$K_1$}
\put(48.5,16){$K_2'$}
\put(48,12){$K_2'''$}
\put(37,23.5){$K_3'$}
\put(37,5){$K_3'''$}
\put(52,14){$K_1$}
\put(95,14){$K_2$}
\put(100.5,14){{\rot $K_3''$}}
\put(80.5,14){{\green $K_3''$}}
\put(79,24.5){$K_3'$}
\put(79,3.5){$K_3'''$}
\end{scriptsize} 		
  \end{overpic} 
\end{center} 
\caption{Trajectories of the vertex $K_3$ during the snapping deformations (three in the unpinned case; two in the pinned case) 
from the blue into the cyan realization, which correspond to straight line 
segments in the space $\RR^b$ of bar lengths. (right) Pinned case, where the pinned knots are indicated by the black diamonds. 
(left) Also in the unpinned case we can pin the first knot $K_1$ without loss of generality. During the snapping deformation the horizontal bar with endpoint 
$K_2$ (indicated by a blue square) changes its length, which is pointed out in Fig.\ \ref{fig:2}. 
}\label{fig:1}	
\end{figure}

\noindent
{\bf Unpinned case:} 
According to the computation of Sec.\ \ref{subsec:comp}, the set $\mathcal{S}$ consists of the following two 
undeformed realizations:
\begin{equation}\label{unpins}
{\blau (k_{2,1},k_{3,1},k_{3,2})=(10,\tfrac{133}{20},\tfrac{7}{20}\sqrt{39})}, \quad
{\cyan (k_{2,1},k_{3,1},k_{3,2})=(10,\tfrac{133}{20},-\tfrac{7}{20}\sqrt{39})},
\end{equation}
and the set $\mathcal{S}^c$ of the following three unstable realizations: 
\begin{align}
&{\rot (k_{2,1},k_{3,1},k_{3,2})=(\tfrac{20}{3},\tfrac{28}{3},0)}     &\quad &\Longrightarrow &\quad &{\rot (l_{12},l_{13},l_{23})=(\tfrac{20}{3},\tfrac{28}{3},\tfrac{8}{3})}, \\  
&{\violet (k_{2,1},k_{3,1},k_{3,2})=(\tfrac{80}{21},-\tfrac{8}{3},0)} &\quad &\Longrightarrow &\quad &{\violet (l_{12},l_{13},l_{23})=(\tfrac{80}{21},\tfrac{8}{3},\tfrac{136}{21})}, \\ 
&{\green (k_{2,1},k_{3,1},k_{3,2})=(\tfrac{220}{21},\tfrac{20}{3},0)} &\quad &\Longrightarrow &\quad &{\green (l_{12},l_{13},l_{23})=(\tfrac{220}{21},\tfrac{20}{3},\tfrac{80}{21})}. 
\end{align}
The two undeformed realizations and the trajectories of $K_3$ under the snapping deformations between them passing through the unstable realizations are illustrated 
in Fig.\ \ref{fig:1}(left) and Fig.\ \ref{fig:2}, respectively. Note that the snappability index 
$s(10,\tfrac{133}{20},\pm\tfrac{7}{20}\sqrt{39})$ equals $1/882$, which corresponds to the 
{\it elastic strain energy density} of the green unstable realization. The corresponding values for the red and violet unstable realizations are $49/882$  and $169/882$, \bigskip respectively.

\noindent
{\bf Pinned case:} We obtain the same set $\mathcal{S}$ as in the unpinned case (cf.\ Eq.\ (\ref{unpins})), but the 
set $\mathcal{S}^c$ consists only of the following two realizations:
\begin{align}
&{\rot (k_{3,1},k_{3,2})=(\tfrac{126}{11},0)}     &\quad &\Longrightarrow &\quad &{\rot (l_{13},l_{23})=(\tfrac{126}{11},\tfrac{16}{11})}, \\  
&{\green (k_{3,1},k_{3,2})=(\tfrac{70}{11},0)} &\quad &\Longrightarrow &\quad &{\green (l_{13},l_{23})=(\tfrac{70}{11},\tfrac{40}{11})}, 
\end{align}
which are illustrated in Fig.\ \ref{fig:1}(right) together with  the trajectories of $K_3$ under the two corresponding snapping deformations. 
For this trivial example the two saddle points of the graph of the {\it elastic strain energy density function} $U(k_{3,1},k_{3,2})/(AL)$ can even 
be visualized (see Fig.\ \ref{fig:3}). The function values of the red and green saddle points are $49/462$ and $1/462$, respectively, where the 
latter one equals the snappability index $s(\tfrac{133}{20},\pm\tfrac{7}{20}\sqrt{39})$ of the pinned triangular framework.

\begin{figure}[t]
\begin{overpic}
    [width=113mm]{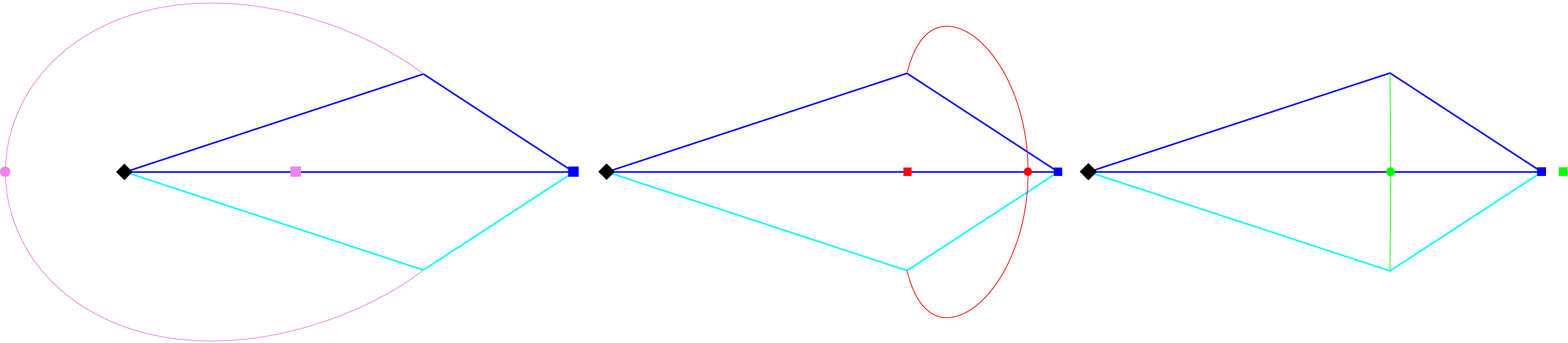}  
\begin{scriptsize}
\put(1,10.2){{\violet $K_3''$}}
\put(18,8.6){{\violet $K_2''$}}
\put(62.5,10.2){{\rot $K_3''$}}
\put(56.7,8.6){{\rot $K_2''$}}
\put(100.3,10.2){{\green $K_2''$}}
\put(88.8,8.6){{\green $K_3''$}}

\end{scriptsize} 			
  \end{overpic} 
\caption{Unstable realizations of the unpinned triangular framework: Each snapping deformation passes an unstable realization, which is 
displayed in violet (left), red (center) and green (right), respectively. The corresponding animations of these snapping deformations can be 
downloaded from {\tt www.dmg.tuwien.ac.at/nawratil/publications.html}.  
}\label{fig:2}	
\end{figure}

\begin{figure}[b]
\begin{center} 
\begin{overpic}
    [width=55mm]{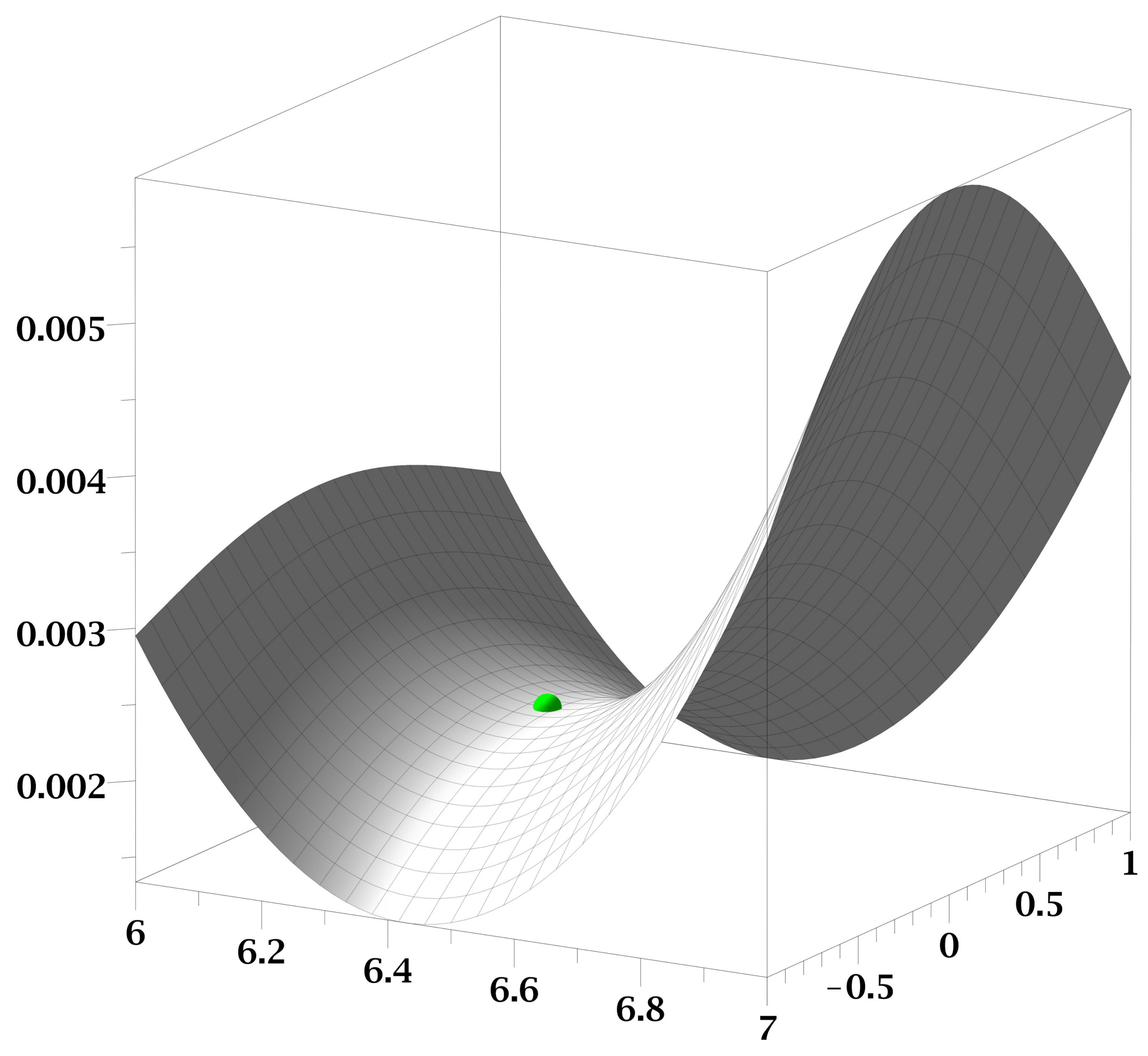}  
  \end{overpic} 
	\hfill
 \begin{overpic}
    [width=55mm]{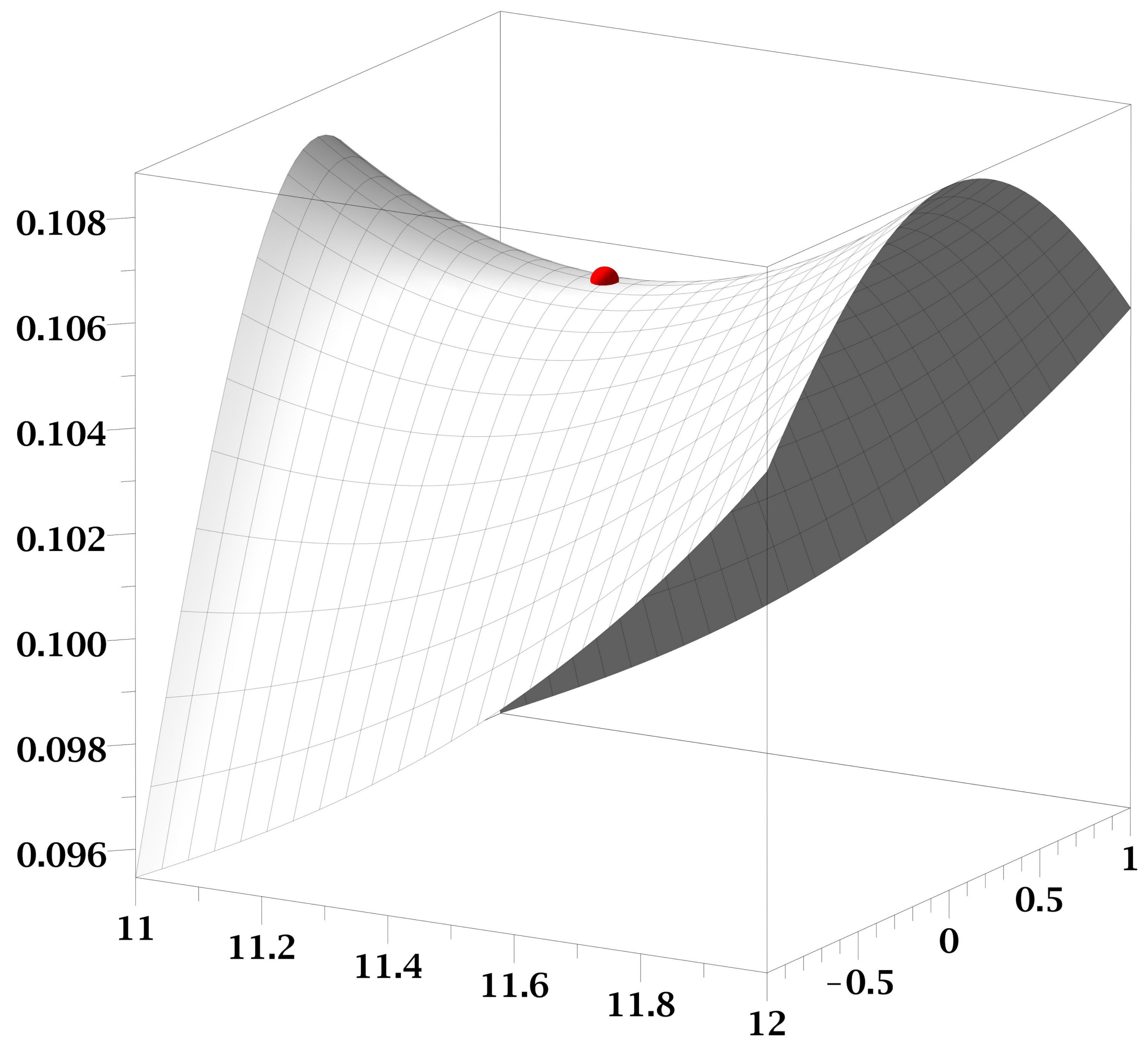}
  \end{overpic} 
\end{center} 
\caption{
Graph of the function $U(k_{3,1},k_{3,2})/(AL)$ of the pinned triangular framework: (left) $k_{3,1}\in[6;7]$ and $k_{3,2}\in[-1;1]$  
(right) $k_{3,1}\in[11;12]$ and $k_{3,2}\in[-1;1]$. The green and red saddle points correspond to the two unstable realization 
indicated in Fig.\ \ref{fig:1}(right).
}\label{fig:3}	
\end{figure}

\subsection{Pinned 3-legged planar parallel manipulator}\label{ex:parallel}

This more sophisticated framework (compared to the triangular one studied before) consists of six knots $K_1,\ldots ,K_6$, where 
the first three knots are pinned to the ground possessing the following coordinates:
\begin{equation}
\Vkt k_1=(0,0), \qquad \Vkt k_2=(3,0), \qquad \Vkt k_3=(2,1). 
\end{equation}
Each of these knots $K_i$ is connected by a bar, the so-called $i$-th leg, with one of the remaining three knots $K_{i+3}$ (for $i=1,2,3$) with
\begin{equation}
\Vkt k_4=(k_{4,1},k_{4,2}), \qquad \Vkt k_5=(k_{5,1},k_{5,2}), \qquad \Vkt k_6=(k_{6,1},k_{6,2}),
\end{equation} 
which form a joint-bar triangle (cf.\ Fig.\ \ref{fig:4}). The intrinsic metric of the framework is given by:
\begin{equation}\label{bar_bc}
(L_{14},L_{25},L_{36},L_{45},L_{46},L_{56})=(4,5,3,3,1,2)
\end{equation}
and implies that the joint-bar triangle $K_4,K_5,K_6$ degenerates as $L_{46}+L_{56}=L_{45}$ holds.
According to the computation of Sec.\ \ref{subsec:comp}, the set $\mathcal{S}$ consists of the following three stable
realizations:
\begin{equation}
\begin{split}
{\blau (\Vkt k_4,\Vkt k_5,\Vkt k_6)}&{\blau =((0.8876,3.9002),(-1.5278,2.1210),(0.0824,3.3071))}, \\
{\cyan (\Vkt k_4,\Vkt k_5,\Vkt k_6)}&{\cyan =((2.0771,3.4184),(4.8072,4.6619),(2.9871,3.8329))}, \\
{\magenta (\Vkt k_4,\Vkt k_5,\Vkt k_6)}&{\magenta =((2.9116,-2.4707),(0.3581,-3.8446),(1.8691,-2.2512))},
\end{split}
\end{equation}
where the first (blue) and second (cyan) are undeformed realizations and the third (magenta) is a deformed one (cf.\  Fig.\ \ref{fig:4}(left)). 
The {\it elastic strain energy density} of the latter realization equals $0.00219$. 

The set $\mathcal{S}^c$ contains $47$ unstable realizations where the one with the smallest 
{\it elastic strain energy density} value of $0.00061$ is given by:
\begin{equation}\label{real_green}
{\green (\Vkt k_4,\Vkt k_5,\Vkt k_6)=((3.2050,2.5883),(1.4895,4.8801),(3.1261,3.6410))},
\end{equation}
which is illustrated in Fig.\ \ref{fig:4}(right). The remaining 46 elements of $\mathcal{S}^c$ are given in Table \ref{tab1} for reasons of completeness.

\begin{remark}
The critical points of the elastic strain energy function where computed by Bertini based on the splitting of the variables into the following two groups:
\begin{equation*}
(k_{4,1},k_{4,2},k_{5,1},k_{5,2},k_{6,1},k_{6,2},q_{14},q_{25},q_{36},q_{45},q_{46},q_{56}), \phm (\lambda_{14},\lambda_{25},\lambda_{36},\lambda_{45},\lambda_{46},\lambda_{56}), 
\end{equation*}
which resulted in 59136 paths. In contrast the full homotopy yields 262144 paths. \hfill $\diamond$
\end{remark}

\begin{table}[t]
\begin{center}
\begin{footnotesize}
\begin{tabular}{|c|c|c|c|c|c||c|}
\hline
$k_{4,1}$	& $k_{5,1}$ & $k_{6,1}$ & $k_{4,2}$ & $k_{5,2}$ & $k_{6,2}$ & $U/(AL)$     \\ \hline\hline
%
4.030 & 6.239 & 4.489 & -0.5010 & -2.720 & -1.352 & 0.0058 \\ \hline 
-2.579 & -1.714 & -1.482 & -1.345 & 1.415 & -0.6494 & 0.0177 \\ \hline  
0.9358 & 2.609 & 0.3781 & 3.889 & 4.985 & 3.524 & 0.0185 \\ \hline 
1.179 & -0.7525 & 1.823 & 3.822 & 3.304 & 3.995 & 0.0185 \\ \hline 
2.778 & 2.941 & 2.750 & -2.770 & -4.796 & -2.078 & 0.0191 \\ \hline 
-2.456 & -0.3583 & -1.199 & -1.645 & -3.345 & -1.294 & 0.0201 \\ \hline 
-2.156 & 0.1224 & -1.160 & 3.506 & 2.084 & 3.090 & 0.0204 \\ \hline  
2.212 & 1.947 & 2.213 & 3.810 & 5.681 & 3.265 & 0.0290 \\ \hline 
4.987 & 6.767 & 4.520 & 0.7908 & 2.181 & 0.5923 & 0.0315 \\ \hline 
-0.0774 & 0.9562 & -0.3930 & 3.889 & 2.578 & 4.320 & 0.0504 \\ \hline 
1.822 & -1.357 & 0.5219 & 0.7588 & 1.022 & 0.1156 & 0.0512 \\ \hline 
3.841 & 1.058 & 3.144 & 0.5499 & 1.821 & -0.1909 & 0.0585 \\ \hline  
3.484 & 0.9825 & 3.365 & -0.9764 & 1.343 & -0.0521 & 0.0627 \\ \hline 
3.717 & 1.094 & 3.193 & -0.7774 & 1.565 & -0.0846 & 0.0629 \\ \hline 
-2.140 & -3.594 & -1.331 & -0.2606 & -0.5004 & 0.2326 & 0.0633 \\ \hline 
-1.683 & -1.276 & -2.146 & -0.7357 & -3.000 & -0.7246 & 0.0685 \\ \hline 
-1.739 & -1.943 & -1.977 & -0.8342 & 1.398 & -1.090 & 0.0709 \\ \hline 
-1.903 & -1.030 & -0.4593 & 3.516 & 3.015 & 2.686 & 0.0740 \\ \hline 
-1.658 & -0.6953 & -0.0537 & 3.640 & 3.368 & 3.187 & 0.0741 \\ \hline 
-1.773 & -1.128 & -0.6979 & 3.586 & 2.821 & 2.312 & 0.0741 \\ \hline 
3.046 & 0.6959 & 3.677 & -0.2001 & 0.9677 & -0.3291 & 0.0744 \\ \hline 
-1.042 & -0.5401 & -0.1441 & -3.448 & -2.650 & -1.908 & 0.0813 \\ \hline 
1.789 & 2.145 & 2.256 & 3.171 & 4.195 & 4.650 & 0.0828 \\ \hline  
-2.847 & -1.875 & -1.217 & -1.478 & -0.9779 & -0.6382 & 0.0828 \\ \hline 
-1.275 & -3.596 & -1.799 & -0.2986 & -0.5276 & -0.4576 & 0.0857 \\ \hline 
2.583 & 3.471 & 3.561 & -2.189 & -3.079 & -3.230 & 0.1229 \\ \hline  
3.718 & 5.109 & 5.277 & -1.870 & -1.684 & -1.725 & 0.1232 \\ \hline 
0.9849 & -1.243 & 1.121 & 0.4462 & 1.478 & 0.4831 & 0.1246 \\ \hline  
5.734 & 4.377 & 5.733 & -0.9031 & -0.2398 & 0.2446 & 0.1262 \\ \hline 
5.809 & 4.223 & 5.263 & -0.3537 & -0.3530 & -1.229 & 0.1328 \\ \hline 
6.161 & 4.209 & 5.361 & -0.4353 & -0.2030 & -0.4448 & 0.1364 \\ \hline 
5.852 & 5.521 & 4.580 & -0.0429 & -0.1258 & -0.7442 & 0.1398 \\ \hline 
6.070 & 5.767 & 4.694 & 0.2238 & 0.2303 & 0.3625 & 0.1410 \\ \hline  
5.568 & 4.914 & 4.844 & -0.1332 & -0.5524 & -1.413 & 0.1422 \\ \hline  
5.177 & 4.151 & 5.738 & -0.8735 & -0.4692 & -1.234 & 0.1434 \\ \hline 
1.719 & 0.0472 & -0.2643 & 0.2274 & -0.0260 & -0.0516 & 0.1446 \\ \hline 
2.555 & 4.055 & 1.465 & 1.798 & -0.9142 & 1.256 & 0.1590 \\ \hline 
-1.332 & -1.942 & -2.359 & -0.6569 & -0.9574 & -1.164 & 0.1593 \\ \hline 
0.7485 & 1.643 & -0.2211 & 0.0430 & 0.0195 & -0.5323 & 0.1894 \\ \hline 
1.870 & 2.051 & 2.113 & 2.226 & 1.752 & 0.2477 & 0.1934 \\ \hline 
1.590 & 4.243 & 1.968 & 0.8727 & -1.124 & 1.006 & 0.2044 \\ \hline  
-1.026 & 3.253 & 0.2169 & 0.1767 & 0.0083 & -0.2709 & 0.2254 \\ \hline 
2.261 & 2.692 & 2.076 & 2.313 & 0.3841 & 0.8505 & 0.2281 \\ \hline 
3.500 & 2.403 & 2.367 & 1.631 & 0.5951 & 0.6044 & 0.2340 \\ \hline 
0.0375 & 0.1406 & 0.9122 & 0.0104 & 0.0386 & 0.4382 & 0.2580 \\ \hline 
-0.3547 & 3.084 & -0.5894 & 0.0129 & -0.0005 & 0.0230 & 0.2839 \\ \hline  
\end{tabular}
\end{footnotesize}
\end{center}
\caption{The remaining 46 elements of $\mathcal{S}^c$ beside the one given in Eq.\ (\ref{real_green}) ordered 
with respect to the elastic strain energy density $U/(AL)$.}
\label{tab1}
\end{table}

The framework cannot snap out of the magenta realization by passing the green one, as the 
{\it elastic strain energy density} of the latter realization is lower. Therefore we consider in the space $\RR^6$ of edge lengths 
the straight line segment $\Vkt l_t$ between the point given in Eq.\ (\ref{bar_bc}) and the corresponding point of the green realization given by
\begin{equation}\label{bar_green}
{\green (l_{14},l_{25},l_{36},l_{45},l_{46},l_{56})=(4.1196, 5.1085, 2.8710, 2.8626, 2.0527, 1.0555)}.
\end{equation}

\begin{figure}[t]
\begin{flushright} 
\begin{overpic}
    [width=115mm]{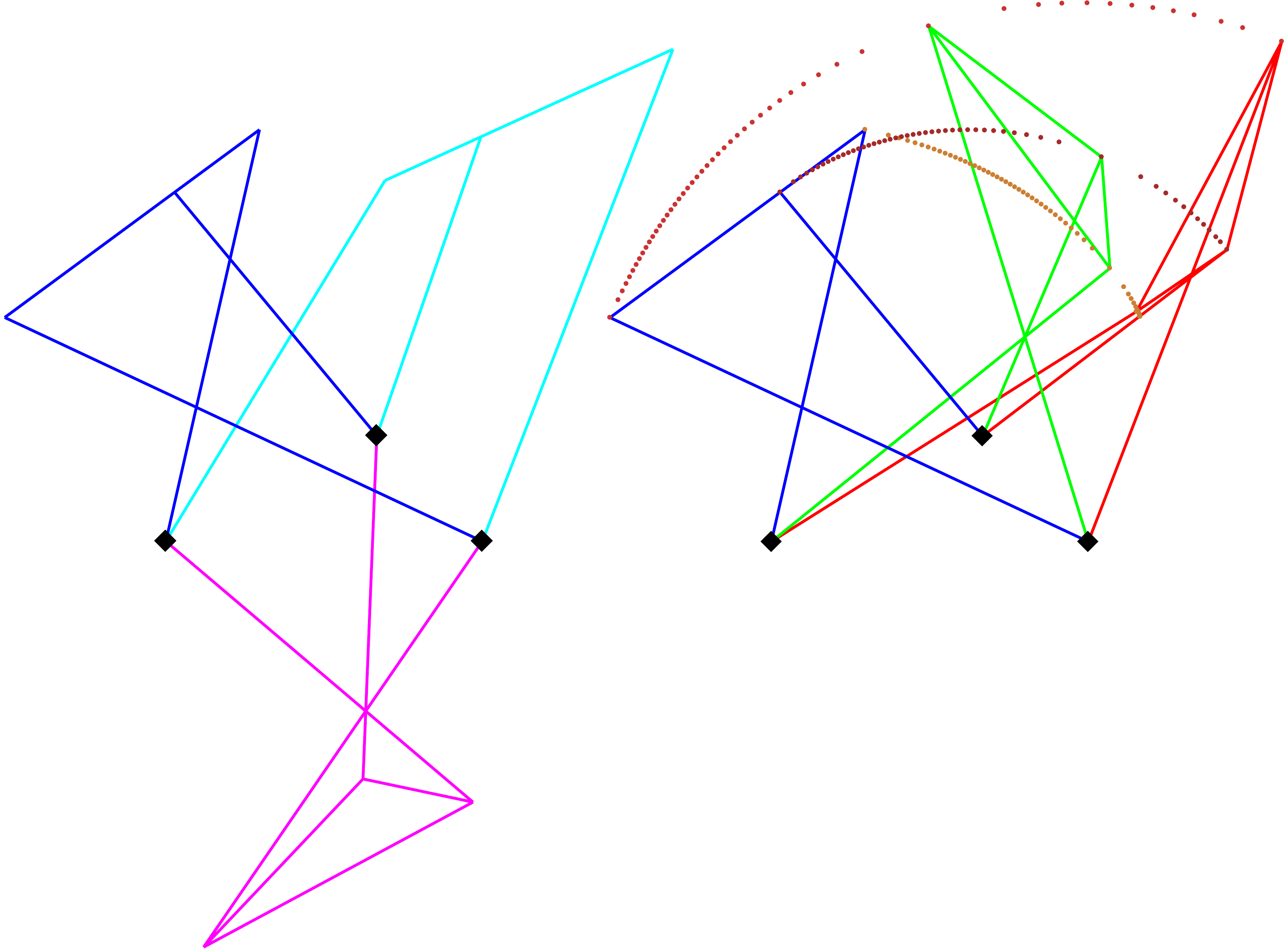} 
	\begin{scriptsize}
\put(14.5,31.5){{$K_1$}}
\put(39,31.5){{$K_2$}}
\put(25.8,39.5){{$K_3$}}
\put(19,64.8){{$K_4$}}
\put(-1.7,48){{$K_5$}}
\put(11.5,60){{$K_6$}}
\end{scriptsize} 			
  \end{overpic} 
\end{flushright} 
\caption{(left) Illustration of the three stable realizations: the blue and the cyan realization are undeformed and the magenta one is deformed.  
The latter one is shaky as it is a deformed ($\Rightarrow$ non-zero self-stress) stable realization. 
This can also be seen by the fact that the three legs belong to a pencil of lines. 
Note that in this example also the undeformed realizations are shaky due to the degenerated triangle $K_4,K_5,K_6$. \newline
(right) Snapping deformation from the blue realization, passing through the green unstable realization and ending up at the 
red realization, which is on the boarder of reality. The line segment between the corresponding points of the 
blue and green realization (cf.\ Eqs.\ (\ref{bar_bc}) and (\ref{bar_green})) in the space $\RR^6$ of edge lengths is equally discretized (40 intervals).
The same holds for the  line segment between  the corresponding points of the green realization and the complex one given in Eq.\ (\ref{compl_real}). 
}\label{fig:4}	
\end{figure}

One of the corresponding 1-parametric deformations has the property that it connects the blue realization with the green realization (cf.\ Eq.\ (\ref{property})). 
A further corresponding deformation also ends up in the green realization. 
This second deformation does not start at the cyan realization, but in the following complex solution for an undeformed realization of the framework:
\begin{equation}\label{compl_real}
\begin{split}
(\Vkt k_4,\Vkt k_5,\Vkt k_6)=
((&3.8697-0.5179i,1.6591+1.2081i),(6.9159+0.1526i, \\ &3.1185-0.1916i),(4.8851-0.2944i,2.1456+0.7415i)). 
\end{split}
\end{equation}
Therefore  the framework will relax from the green realization towards this complex solution. 
The realization, where this 1-parametric deformation hits the boarder of reality, is illustrated by the red shaky realization in Fig.\ \ref{fig:4}(right).

\end{document}